\documentclass[aps,a4paper,twocolumn]{revtex4}
\usepackage[utf8]{inputenc}
\usepackage{booktabs}
\usepackage{mathtools}
\usepackage{amsmath}
\usepackage{physics}
\usepackage[colorlinks,bookmarks=false,linkcolor=blue,urlcolor=blue,citecolor=magenta,linktocpage=true]{hyperref}
\usepackage[dvipsnames]{xcolor}
\usepackage{amssymb}
\usepackage{amsthm}
\usepackage[paperwidth=210mm,paperheight=297mm,centering,hmargin=2.0cm,vmargin=2.5cm]{geometry}
\usepackage{cancel}
\usepackage{bbm}
\usepackage{dsfont}

\DeclareMathAlphabet{\mathbbb}{U}{bbold}{m}{n}

\newtheorem{lemma}{Lemma}

\def \sing {\mathfrak{s}}
\def \compls {\mathfrak{c}}
\newcommand{\Mdu}[3]{{#1}_{#2}^{\phantom{#2}#3}}

\usepackage{soul}
\usepackage{empheq}

\begin{document}

\title{The Virasoro Completeness Relation and Inverse Shapovalov Form}
\author{Jean-François Fortin}
\email{jean-francois.fortin@phy.ulaval.ca}
\author{Lorenzo Quintavalle}
\email{lorenzo.quintavalle.1@ulaval.ca}
\author{Witold Skiba}
\email{witold.skiba@yale.edu}
\affiliation{$^\star$$^{\dagger}$ D\'epartement de Physique, de G\'enie Physique et d'Optique, Universit\'e Laval, Qu\'ebec, QC~G1V~0A6, Canada,\\
$^\ddagger$ Department of Physics, Yale University, New Haven, CT 06520, USA}

\begin{abstract}
In this work, we introduce an explicit expression for the inverse of the symmetric bilinear form of Virasoro Verma modules, the so-called Shapovalov form, in terms of singular vector operators and their conformal dimensions. Our proposed expression also determines the resolution of the identity for Verma modules of the Virasoro algebra, and can be thus employed in the computation of Virasoro conformal blocks via the sewing procedure.
\end{abstract}
\maketitle

\section{Introduction}

The central extension of the complexified Lie algebra of polynomial vector fields on the circle, the well-known Virasoro algebra, is at the heart of many aspects of modern theoretical physics. Since the seminal results of Kac~\cite{Kac:1978ge} and Feigin and Fuchs~\cite{Feigin:1981st,FeiginFuchs:book}, 
which clarified the conditions for Virasoro representations to be degenerate, 
it has been clear how a good control of the Virasoro representation theory~\cite{Iohara:book} can have a profound impact on our understanding of two-dimensional conformal field theories~\cite{DiFrancesco:1997nk}. The presence of degenerate fields in the spectrum is at the core of the celebrated exact solvability of the Virasoro minimal models~\cite{Belavin:1984vu}, Liouville theory~\cite{Dorn:1994xn,Zamolodchikov:1995aa,Teschner:1995yf}
and the recently-explored solvability of critical loop models~\cite{Nivesvivat:2023kfp,Roux:2024ubh}. These are important non-perturbative lampposts within the landscape of quantum field theory, that also shed light on our modern understanding of quantum gravity~\cite{Polchinski:books,Maldacena:1997re}. 

While the overall construction of degenerate Virasoro representations is well known, being equivalent to quotients of a Verma module by its proper submodules generated by singular vectors, an explicit understanding of certain related structures has not yet been achieved.
Exemplary in this sense is the case of the singular vectors themselves, for which a general formula beyond the first simplest cases~\cite{Benoit:1988aw,Millionshchikov2016} is still lacking. 

Another related example, central to this paper, is the so-called Shapovalov form, also commonly referred to as the Virasoro Gram matrix. This is the Hermitian form that naturally encodes the scalar product between states of a Verma module, whose inverse appears explicitly as the ``propagator'' of the sewing procedure to construct arbitrary conformal blocks~\cite{Sonoda:1988mf}. For this reason, the Shapovalov form can also constitute an important ingredient in the context of the AGT correspondence~\cite{Alday:2009aq} or the conformal bootstrap~\cite{Ribault:2014hia}. Despite its relevance, a general expression of the Shapovalov form is currently not known.  

In this paper, we aim to address at least in part this shortcoming by providing a way to construct, for generic central charge, the inverse Shapovalov in terms of singular vectors. This is achieved through the intimate relation between the inverse Shapovalov and the resolution of the identity (a.k.a. completeness relation) for the Virasoro Verma modules. The result that we get is in a form readily employable in the computation of conformal blocks, provided knowledge of singular vectors. 

The structure of this paper is as follows. In Section~\ref{sect:VirasoroAlgebra} we set up our notation for the Virasoro algebra and provide a basic definition of both the Virasoro singular vectors and the Shapovalov form. In Section~\ref{sect:inverseShapovalov_and_proof} we then discuss the connection between the resolution of the identity and the inverse Shapovalov form, to finally introduce and prove a new analytic expression of these objects in terms of singular vectors. 

\section{The Virasoro algebra}\label{sect:VirasoroAlgebra}

The Virasoro algebra, denoted as $\mathit{Vir}$, is an infinite-dimensional Lie algebra generated by $L_n$, $n\in \mathbb{Z}$ and a central element~$\hat{c}$ characterized by the following commutation relations:
\begin{gather}
    \comm{L_m}{L_n}=(m-n)L_{m+n}+\frac{\hat{c}}{12}(m^3-m)\delta_{m+n,0}\,, \notag\\ \comm{\hat{c}}{L_n}=0\,.
    \label{eq:Vir_comm_rel}
\end{gather}
A Verma module $V(h,c)$ of the Virasoro algebra $\mathit{Vir}$ corresponds to the module generated by a lowest-weight vector $\ket{h,c}$ with properties
\begin{equation}
\begin{aligned}
    L_0 \ket{h,c} &= h \ket{h,c} , \\ \hat{c} \ket{h,c} &= c \ket{h,c}, \\ L_n \ket{h,c} &= 0 \quad \text{for} \quad n>0\,.
\end{aligned}
\end{equation}
In physics language, these correspond to the set made of a primary state with generic conformal dimension and all of its descendants. 
For compactness of notation, we will often suppress the dependence of the states on the central charge $\ket{h}\equiv \ket{h,c}$ and trade the parameter $c$ with the more convenient parametrization of the central charge in terms of $t$ defined by
\begin{equation}
    c=13-6 t-\frac{6}{t}.
\end{equation}
The Verma module $V(h,c)$ is therefore a space spanned by $\ket{h}$ and the action on it of the negative modes of $\mathit{Vir}$. Generic vectors can be then expressed as linear combinations of monomials of the form
\begin{equation}
    L_{-i_m}L_{-i_{m-1}}\dots L_{-i_1} \ket{h} \quad \text{with} \quad i_1,i_2,\dots ,i_m > 0.
    \label{eq:ordered_monomials}
\end{equation}
For every monomial above there is a well-defined grading provided by $\ell := \sum_{j=1}^m i_{j}$ which is known as the \emph{level} of the monomial. For fixed level $\ell$, the set of all monomials of form~\eqref{eq:ordered_monomials} is isomorphic to the set of ordered integer partitions of $\ell$
\begin{equation}
    \mathbb{O}_\ell=\left \{ (i_1,\dots, i_m) \in \mathbb{Z}_{+}^m \,\bigg\vert\, 0<m\le \ell \And\! \sum_{j=1}^m i_j = \ell \right\},
    \label{eq:set_ordered_partitions}
\end{equation}
but due to the commutation relations~\eqref{eq:Vir_comm_rel}, this set is overcomplete.

A basis is instead provided by the set of all ordered monomials
\begin{multline}
    L_{-\mu_m}L_{-\mu_{m-1}}\dots L_{-\mu_1}\! \ket{h}\\
    \quad \text{with} \quad \mu_m\!\ge\! \mu_{m-1}\! \ge \!\dots \ge\! \mu_1 > 0,
    \label{eq:standard_monomial_basis}
\end{multline} 
which, for a given level $\ell$, is isomorphic to the set of standard integer partitions of $\ell$ (equivalently, to the space of Young diagrams with $\ell$ boxes)
\begin{multline}
    \mathbb{P}_\ell = \biggl \{ (\mu_1,\dots, \mu_m) \in \mathbb{Z}_{+}^m \,\bigg\vert\, 0<m\le \ell  \\
    \And \mu_m\ge \mu_{m-1}\ge \dots \ge \mu_1 >0 \And \sum_{j=1}^m \mu_j = \ell \biggr\}.
    \label{eq:partitions_P}
\end{multline}
We will use Greek letters for labels that take values in $\mathbb{P}_\ell$, which will allow us to display basis vectors of the Verma module $V(h,c)$ as $L_{-\mu} \ket{h}$. Here the minus sign is understood to reverse both the order and all of the signs of the vectors $(\mu_1,\dots,\mu_m)$ in $\mathbb{P}_\ell$. For a given monomial labeled by $\mu$, we denote its level as
\begin{equation}
    \abs{\mu}=\sum_{j=1}^m \mu_j \qquad \text{with} \quad \mu = (\mu_1,\dots,\mu_m)\,.
\end{equation}

Not all of the irreducible representations of $\mathit{Vir}$ correspond to Verma modules. In the following section we discuss the other irreducibles via the introduction of the pairing provided by the Shapovalov form. 

\subsection{Singular Vectors and the Shapovalov Form}

When working with the Virasoro algebra, it is natural to introduce a Hermitian conjugation that maps positive modes to negative ones as $L_n^\dagger= L_{-n}$. Via this conjugation, the ordered products of generators $L_{-\mu}$ that characterize the basis vectors of a Verma module are mapped to the products $L_{\mu}$, with an ordering that matches that of~\eqref{eq:partitions_P}. With this at hand, one can use the commutations relations of Virasoro to rewrite any product of positive modes with negative modes as follows
\begin{equation}
    L_\mu L_{-\nu}=L_{-\alpha}\Mdu{S}{\mu\nu}{\alpha\beta}(L_0,\hat{c})L_\beta \, .
\end{equation}
For the case $\alpha=\beta=()$, only possible when $\abs{\mu}=\abs{\nu}=\ell$, this corresponds to an important operator known as the \emph{Shapovalov form} $\left[S_{\ell}(L_0,\hat{c})\right]_{\mu\nu}$.
The Shapovalov form can be explicitly computed via the commutation relations of Virasoro generators and focusing on terms that only include the zero-level generators $L_0$ and $\hat c$. For instance at level 3 the $\mu=(3)$, $\nu=(1,2)$ term is
\begin{multline}
    \left[L_{3}L_{-2}L_{-1}\right]\big\rvert_{\text{zero}} = \!\left[(5L_{1}+\cancel{L_{-2}L_{3}})L_{-1}\right] \big\rvert_{\text{zero}}\\
    =\left[10L_0 + \cancel{5L_{-1}L_{1}}\right]\big\rvert_{\text{zero}}=10 L_0\,,
    \end{multline}
and the overall matrix can be similarly computed to be
\begin{equation*}
\begin{aligned}
    &S_3(L_0,\hat{c})=\\
&\scalebox{0.89}{$
    \left(\!
\begin{array}{ccc}
\! 6 L_0-12 \frac{\hat t^2+1}{\hat t}+26 & 10 L_0 & 24 L_0 \\
 10 L_0 & \hspace*{-11pt} L_0 \left(8 L_0-6 \frac{\hat t^2+1}{\hat t}+21\right) & 12 L_0 (3 L_0+1) \\
 24 L_0 & 12 L_0 (3 L_0+1) & \hspace*{-6pt}24 L_0 (L_0+1) (2 L_0+1) \\
\end{array}
\!\right)\!.$
}\!
\end{aligned}\!
\end{equation*}
The importance of the Shapovalov form stems from the fact that it encodes a scalar product between states of a Verma module. If we consider in fact a scalar product associated with the Hermitian conjugation discussed before and such that $\braket{h}=1$, we have that the scalar product between descendant states is given by
\begin{equation}
    \matrixel{h}{L_{\mu}L_{-\nu}}{h}=\matrixel{h}{\left[S_{\ell}(L_0,\hat c)\right]_{\mu\nu}}{h}=\left[S_{\ell}(h,c)\right]_{\mu\nu}
\end{equation}
if $\abs{\mu}=\abs{\nu}=\ell$, while the matrix elements vanish identically for vectors of different levels
\begin{equation}
    \matrixel{h}{L_{\mu}L_{-\nu}}{h}=0 \qquad \text{if} \quad \abs{\mu}\ne \abs{\nu}\,.
\end{equation}
As opposed to $S_\ell(L_0,\hat c)$, the Shapovalov matrix $S_\ell (h,c)$ need not be of full rank. For instance, in the case $h=0$ one can easily check that $S_1(0,c)=\matrixel{0}{L_1L_{-1}}{0}=\matrixel{0}{2L_0}{0}$=0, which implies that the vector $L_{-1}\ket{0}$ is orthogonal to any other state in the Verma module. Vectors of this kind, which can be shown to be combinations of descendants $v^{\mu} L_{-\mu} \ket{h}$ that also satisfy the primary condition $L_n (v^{\mu} L_{-\mu} \ket{h})=0$ for all $n>0$, take the name of \emph{singular vectors} (the term \emph{null vector} is also commonly used). Verma modules that contain singular vectors are also known as \emph{degenerate Verma modules}. The presence of singular vectors is detected by the Shapovalov matrix which, above a certain level, becomes degenerate for values of $h$ that admit singular vectors. 
To study these, one can work out the determinant of the Shapovalov matrix, which for any level $\ell$ is described by the \emph{Kac determinant formula}
\begin{multline}
    \det S_\ell(h,c)= \prod_{\substack{r,s \ge 1\\r s \le \ell}} \left[(2r)^s s!\right]^{p(\ell-r s)-p(\ell-r(s+1))}\\
    \times \prod_{\substack{r,s \ge 1\\r s \le \ell}} \left[h-h_{\expval{r,s}} \right]^{p(\ell-r s)},
    \label{eq:Kacdeterminant}
\end{multline}
where
\begin{equation}
    h_{\expval{r,s}}= \frac{t  (r+1)-(s+1)}{2} \frac{t^{-1}(1-s)-(1-r)}{2}.
\end{equation}
The Kac determinant instructs us that, for generic central charge, there are new possible singular vectors at any level~$\ell$. These are parametrized by two integers $r,s$ such that $r s = \ell$, and each vector occurs when the primary state has conformal dimension equal to the corresponding $h_{\expval{r,s}}$. Explicitly, this set is in one-to-one correspondence with
\begin{equation}
    \sing_\ell = \left\{ (r,s) \in \mathbb{Z}_{+}^2 \,\big \vert\, r s = \ell \right\},
    \label{eq:set_singular_vectors}
\end{equation}
which we will then refer to as the set of singular vectors at level $\ell$.
The remaining zeros of the Kac determinant, \textit{i.e.}~the $h_{\expval{r,s}}$ for $r s < \ell$, correspond instead to descendants of singular vectors encountered at level $r s <\ell$. 

For every $(r,s)\in \mathfrak{s}_\ell$, there is then a singular vector of the form
\begin{equation}
    L_{\expval{r,s}}\! \ket{h_{\expval{r,s}}}\equiv v_{\expval{r,s}}^\mu L_{-\mu}\! \ket{h_{\expval{r,s}}}\,,
    \label{eq:singular_vector_decomposition}
\end{equation}
where $v_{\expval{r,s}}^\mu$ are coefficients that depend on $r$, $s$, $t$ which multiply the basis of products of generators at level $|\mu|=r s$. In an abuse of notation, we will often use the term singular vector also to stand for the combination of generators $L_{\expval{r,s}}$ alone, without acting on a state. Conventionally, we normalize the singular vectors such that the coefficient of the term containing $L_{-1}^{r s}$ is equal to one, so
\begin{equation}
    v_{\expval{r,s}}^{(1,\dots,1)}=1\,.
\end{equation}

While concrete expressions for singular vectors for $r=1,2$ or $s=1,2$ are known explicitly~\cite{Benoit:1988aw,Millionshchikov2016}, only some properties~\cite{FeiginFuchs:book} and computation algorithms~\cite{Bauer:1991ai,Kent:1991qj} are known for the general case. The main formulas we will introduce in Section~\ref{ssect:inverse_Shapovalov} will be expressed in terms of the $L_{\expval{r,s}}$, and they thus rely on the knowledge of singular vectors. 

To show one concrete example, let us consider the singular vector for $\expval{r,s}=\expval{3,1}$, corresponding to
\begin{equation*}
\begin{aligned}
    L_{\expval{3,1}}&\ket{h_{\expval{3,1}}}\\
    &= \left[2 t (2 t-1) L_{-3}-4 t L_{-2} L_{-1}+L_{-1}^3\right]\ket{h_{\expval{3,1}}}\,.
    \end{aligned}
\end{equation*}
This behaves as a lowest-weight vector since, when acted upon with positive modes and the product of generators is appropriately reordered, the combination of zero modes that are produced vanish when acting on $\ket{h_{\expval{3,1}}}$. For instance, acting on it with $L_1$, we get
\begin{equation}
    \begin{aligned}
    L_1 L_{\expval{3,1}}& \ket{h_{\expval{3,1}}}\\
    &= -2\left(4t L_{-2}-3 L_{-1}^2\right) \left(L_0-2t+1\right) \ket{h_{\expval{3,1}}}
    \end{aligned}
    \label{eq:vanishing_of_L31}
\end{equation}
which clearly vanishes due to $h_{\expval{3,1}}=2t-1$.

Note how, in the last operation, we needed to perform the reordering in order to see the vanishing action on the singular vector.
This is going to be relevant in the remainder of this paper, since the resolution of the identity will require us to work in a natural extension of the Universal Enveloping Algebra that includes rational functions of the zero modes of the algebra.
In this case, apparent ambiguous expressions may appear, such as
\begin{equation}
    L_1 L_{\expval{3,1}} \frac{1}{L_0-h_{\expval{3,1}}} \ket{h_{\expval{3,1}}}.
    \label{eq:apparent_divergence}
\end{equation}
While seemingly divergent, expressions like the one above can be made sense of by implementing the simple general prescription to \emph{always reorder and recombine the combinations of generators before acting on a state}. In the case of~\eqref{eq:apparent_divergence}, for example, our prescription requires us to perform the same steps of~\eqref{eq:vanishing_of_L31} to get
\begin{equation}
    \begin{aligned}
        L_1 &L_{\expval{3,1}} \frac{1}{L_0-h_{\expval{3,1}}} \ket{h_{\expval{3,1}}}\\
        &= -2\left(4t L_{-2}-3 L_{-1}^2\right) \frac{L_0-2t+1}{L_0-h_{\expval{3,1}}}\ket{h_{\expval{3,1}}}\\
        &= -2\left(4t L_{-2}-3 L_{-1}^2\right) \ket{h_{\expval{3,1}}}
    \end{aligned}
\end{equation}
which is well defined.

\section{The inverse Shapovalov form and the resolution of the identity}\label{sect:inverseShapovalov_and_proof}

We now turn to the central subject of this work: the inverse Shapovalov form and its relation with the resolution of the identity.
As the name suggests, the inverse Shapovalov form $S_\ell^{-1}(L_0,\hat c)$ is simply the matrix inverse of the object we defined in the previous section, characterized by the obvious identities
\begin{multline}
    \left[S_\ell^{-1}(L_0,\hat c) \right]^{\mu\nu}\left[S_\ell(L_0,\hat c) \right]_{\nu\rho}\\
    =\left[S_\ell(L_0,\hat c) \right]_{\rho\nu}\left[S_\ell^{-1}(L_0,\hat c) \right]^{\nu\mu}=\delta^{\mu}_\rho\,.
\end{multline}
Given the positioning of indices in the inverse Shapovalov form, it is natural to work with it in an index-free fashion, by simply contracting all indices with basis vectors as follows
\begin{equation}
    \mathbf{S}_\ell^{-1}(L_0,\hat c) = L_{-\mu}\left[S_\ell^{-1}(L_0,\hat c) \right]^{\mu\nu}L_{\nu}\,.
    \label{eq:isf}
\end{equation}
It will soon be clear how this is a very natural object to consider, given the simple expression we propose for $\mathbf{S}_\ell^{-1}(L_0,\hat c)$, and the way it appears naturally in the resolution of the identity within a Virasoro Verma module. 
Note that, given an index-free expression for $\mathbf{S}_\ell^{-1}(L_0,\hat c)$, one can obtain any matrix element $\left[S_\ell^{-1}(L_0,\hat c) \right]^{\mu\nu}$ by reordering the products of generators and extracting the coefficient of the $L_{-\mu}L_{\nu}$ term.

The relevance of all this to the resolution of the identity begins to manifest with the observation that the index-free inverse Shapovalov $\mathbf{S}_{\ell}^{-1}(L_0,\hat{c})$ is an operator that acts as the identity on any level-$\ell$ descendant. For $\abs{\rho}=\ell$, we have in fact, given any primary state $\ket{\psi}$ of a Verma module $V(h,c)$, the following identity:
\begin{equation}
\begin{aligned}
    \mathbf{S}_{\ell}^{-1}(L_0,\hat{c})L_{-\rho} \ket{\psi} = L_{-\mu}\left[S^{-1}_{\ell}(L_0,\hat{c})\right]^{\mu\nu} \!L_{\nu} L_{-\rho} \ket{\psi} \\
    = L_{-\mu}\left[S^{-1}_{\ell}(L_0,\hat c)\right]^{\mu\nu}\left[S_{\ell}(L_0,\hat c)\right]_{\nu\rho}\ket{\psi} = L_{-\rho}\! \ket{\psi},
    \label{eq:inverseShap_identity_on_descendants}
\end{aligned}
\end{equation}
where we used the defining properties of the Shapovalov form and its inverse. Note that expression~\eqref{eq:inverseShap_identity_on_descendants} is valid for any Verma module, including degenerate Verma modules $V(h_{\expval{r,s}},c)$ for which $S_{\ell}(h_{\expval{r,s}},c)$ is not invertible. Our prescription of always having the products of generators in a standard ordering before acting on a state, in fact, prevents us from acting on $\ket{h_{\expval{r,s}}}$ with $S_{\ell}^{-1}(L_0,\hat c)$, which could otherwise produce divergences. 

With this in mind, we can then write the resolution of the identity $\mathds{1}(h,c)$ for a generic Verma module $V(h,c)$ as
\begin{equation}
    \mathds{1}\equiv\mathds{1}(h,c)\, = \sum_{\ell\geq0} L_{-\mu}\ket{h}\!\bra{h}\!\left[S^{-1}_{\ell}(L_0,\hat{c})\right]^{\mu\nu} L_{\nu}\,,
    \label{eq:resol_identity}
\end{equation}
where it is understood that $L_{()}=S_0^{-1}(L_0,\hat{c})=1$ and we introduced the projector $\ket{h}\!\bra{h}$
to kill the contribution of the level-$\ell$ inverse Shapovalov when acting on terms of level $n>\ell$. 

Acting with the representation~\eqref{eq:resol_identity} on an arbitrary state $L_{-\rho} \ket{h}$, we can indeed verify
\begin{equation}
\begin{aligned}
    &\sum_{\ell\geq0} L_{-\mu}\ket{h}\!\bra{h}\left[S^{-1}_{\ell}(L_0,\hat c)\right]^{\mu\nu} L_{\nu} L_{-\rho} \ket{h}\\
    &=\sum_{\ell\geq0} L_{-\mu}\ket{h}\!\bra{h}\left[S^{-1}_{\ell}(L_0,\hat c)\right]^{\mu\nu}\left[S_{\ell}(L_0,\hat c)\right]_{\nu\rho}\ket{h} \delta_{\ell,\abs{\rho}} \\
    &= L_{-\rho}\! \ket{h},
\end{aligned}
\label{eq:applying_general_resolution_identity}
\end{equation}
where we used the fact that $\matrixel{h}{L_\nu L_{-\rho}}{h}$ is zero unless $\ell\equiv\abs{\nu}=\abs{\rho}$.
Let us note how, given~\eqref{eq:inverseShap_identity_on_descendants}, the expression~\eqref{eq:resol_identity} can also be adapted to work in the case of degenerate Verma modules with the added precaution of considering a projector that can only act on its right (as opposed to being acted upon with the inverse Shapovalov form).

It is now clear that the index-free version of the inverse Shapovalov form is directly connected to the expression of the resolution of the identity for Virasoro Verma modules. 
Note that, in the context of two-dimensional conformal field theories, this is relevant to the computation of arbitrary correlation functions of fields $\phi_i(z_i)$ through the sewing relation~\cite{Sonoda:1988mf}
\begin{multline}
    \expval{\phi_1(z_1)\cdots\phi_n(z_n)}=\sum_h\matrixel{0}{\phi_1(z_1)\cdots\phi_i(z_i)}{L_{-\mu}h}\\
    \times\left[S^{-1}_{\ell}(h,c)\right]^{\mu\nu}\matrixel{L_{-\nu}h}{\phi_{i+1}(z_{i+1})\cdots\phi_n(z_n)}{0}.
    \label{eq:correlator_decomp_shapovalov}
\end{multline}
Equation~\eqref{eq:correlator_decomp_shapovalov} effectively translates into a conformal block decomposition, where the higher-point conformal blocks are determined in terms of lower-point ones and the inverse Shapovalov matrix.

Our next goal is to provide an expression for the (index-free) inverse Shapovalov form~\eqref{eq:isf} in terms of the singular vectors, which will consequently make also the expression~\eqref{eq:resol_identity} explicit.

\subsection{An Expression for the Inverse\\ Shapovalov Form}\label{ssect:inverse_Shapovalov}

Our main goal now is to provide an explicit expression for the inverse Shapovalov form $\mathbf{S}_\ell^{-1}(L_0,\hat c)$. Our expression will be written and proven for generic values of the central charge, i.e. $t\in \mathbb{C}\setminus\mathbb{Q}$, but a limiting procedure to rational values of $t$ is in principle possible.
To write down our proposal, we need to first discuss notation. 
In addition to the set $\mathfrak{s}_\ell$ defined in~\eqref{eq:set_singular_vectors} that is used to label the singular vectors $L_{\expval{r,s}}$ at level $\ell$, we also need a set $\compls_{\ell}$ that, for every integer number $\ell\ge 0$, labels all possible products of singular vectors that are of total level equal to $\ell$. Given the set of order-discriminating integer partitions $\mathbb{O}_\ell$ defined in~\eqref{eq:set_ordered_partitions}, the set $\compls_\ell$ can be written as 
\begin{multline}
    \compls_\ell= \Bigl \{ \bigl((r_1,s_1),\dots, (r_m,s_m)\bigr) \in \sing_{i_1}\times \dots \times \sing_{i_m} \\\Big\vert\, 0<m\leq \ell\And (i_1,\dots, i_m) \in \mathbb{O}_\ell \Bigr \}.
\end{multline}
Note that, compatibly with our definition, we have the empty set at level zero, $\compls_0=\{\}$.
To compactify our notation, we will use a different slicing of the set $\compls_\ell$ made of all vectors $\boldsymbol{r}\equiv(r_1,\dots,r_m)$ and $\boldsymbol{s}\equiv(s_1,\dots,s_m)$ subject to the constraint $\boldsymbol{r}\cdot\boldsymbol{s}=\ell$; we will then just display the latter constraint when summing over this set $\compls_{\ell}$.

At this point, we can write our proposed expression for the inverse Shapovalov form as
\begin{multline}
    \mathbf{S}_\ell^{-1}(L_0,\hat{c})=\sum_{\boldsymbol{r}\cdot\boldsymbol{s}=\ell} L_{\expval{r_{m},s_{m}}}\dots L_{\expval{r_{1},s_{1}}}\\
    \times\frac{q_{\expval{r_1,s_1},\dots, \expval{r_{m},s_{m}}}}{L_0-h_{\expval{r_1,s_1}}}L_{\expval{r_{1},s_{1}}}^{\dagger}\dots L_{\expval{r_{m},s_{m}}}^{\dagger} \,,
    \label{eq:inverse_Shapovalov_form}
\end{multline}
in terms of the coefficients $q_{\expval{r_1,s_1},\dots \expval{r_{m},s_{m}}}$ defined by
\begin{multline}
    q_{\expval{r_1,s_1},\dots, \expval{r_{m},s_{m}}}\!=\\
    \! 
    \frac{\prod_{i=1}^{m} q_{\expval{r_i,s_i}}}{\prod_{j=2}^{m}\!\left(h_{\expval{r_{j-1},s_{j-1}}}\!+\!r_{j-1} s_{j-1}\!-\!h_{\expval{r_j,s_j\!}}\!\right)}\,,
    \label{eq:recurrence_q_coefficients}
\end{multline}
with
\begin{equation}
    q_{\expval{r,s}}=\lim_{h\to h_{\expval{r,s}}}\frac{h-h_{\expval{r,s}}}{\sum_{\substack{\mu,\nu\\|\mu|=|\nu|=rs}}v_{\expval{r,s}}^\mu[S_{rs}(h,c)]_{\mu\nu}v_{\expval{r,s}}^\nu}\,.
    \label{eq:qrs_as_inverse_norm}
\end{equation}
A few comments are in order. To begin with, the expression~\eqref{eq:inverse_Shapovalov_form} is reminiscent of a spectral decomposition for the inverse Shapovalov form but with an overcomplete basis. For instance, at level $\ell=2$, the overcomplete basis is given by $\{L_{\expval{2,1}},L_{\expval{1,2}},L_{\expval{1,1}}L_{\expval{1,1}}\}$, contrary to the standard basis $\{L_{-2},L_{-1}L_{-1}\}$.
Furthermore, the coefficients~\eqref{eq:qrs_as_inverse_norm} correspond to the inverse of the regularized norm squared for the singular vectors $L_{\expval{r,s}}$, interpreting the Shapovalov matrix $S_{rs}(h,c)$ as a metric in the space of descendants. An expression for this norm has been proposed by Zamolodchikov in~\cite{Zamolodchikov:2003yb}. This expression, which we display in~\eqref{eq:Zamolodchikov_norm}, is equivalent to the inverse of the following:
\begin{equation}
    q_{\expval{r,s}}\!=\!\left[\! \frac{2(-1)^{r s +1}}{r s} \!\prod_{j=1}^{r} (j t)_s(-jt)_s \!\prod_{k=1}^{s}\! (k/t)_r(-k/t)_r\!\right]^{-1} \hspace*{-5pt},
    \label{eq:single_vector_coefficients}
\end{equation}
which we find more pleasant in terms of the ranges of products and due to the absence of square roots of $t$. As a final remark, both~\eqref{eq:recurrence_q_coefficients} and~\eqref{eq:qrs_as_inverse_norm} display some singular behavior when $t=t_j^\pm$ (if the latter is different than zero or infinity), where
\begin{equation}
    t_j^+=-\frac{s_{j-1}+s_j}{r_{j-1}-r_j},\qquad t_j^-=-\frac{s_{j-1}-s_j}{r_{j-1}+r_j}.
    \label{eq:central_charge_divergences}
\end{equation}
These points correspond to rational values of $t$, for which the representations are known to be doubly-degenerate, where identities can arise between (products of) singular vector operators and the Shapovalov form develops double zeros.
These apparent singular points, however,
do not translate to a divergence of the index-free inverse Shapovalov form. In fact, to respect~\eqref{eq:inverseShap_identity_on_descendants}, the latter has to be non-divergent for arbitrary Verma modules, including those whose central charge satisfies~\eqref{eq:central_charge_divergences} for some pair of singular vectors. 
Heuristically, the singularities in the $q$ coefficients arise to allow, in the limit $t\to t_{j}^{\pm}$, the emergence of double poles $(L_0-h_{\expval{r,s}})^{-2}$ when certain (products of) singular vector operators are identified and a common denominator is taken.
While we did not come up with a general way to show this phenomenon explicitly, we could indeed verify for the first low levels that the apparent poles in~\eqref{eq:recurrence_q_coefficients} and~\eqref{eq:qrs_as_inverse_norm} get canceled when recombining all coefficients of the inverse Shapovalov in the standard monomial basis~\eqref{eq:standard_monomial_basis}. This can be seen as evidence that our formula can also be used to resolve the level-$\ell$ identity in cases with rational $t$, provided that the limit to the corresponding value is taken carefully.

With our expression for the inverse Shapovalov form, the resolution of the identity~\eqref{eq:resol_identity} can be explicited as
\begin{multline}
    \mathds{1}=
    \sum_{\substack{
    \ell\ge 0\\
    \boldsymbol{r}\cdot\boldsymbol{s}=\ell}} L_{\expval{r_{m},s_{m}}}\dots L_{\expval{r_{1},s_{1}}}\ket{h}\\
    \times\bra{h}\frac{q_{\expval{r_1,s_1},\dots, \expval{r_{m},s_{m}}}}{L_0-h_{\expval{r_1,s_1}}}L_{\expval{r_1,s_1}}^{\dagger}\dots L_{\expval{r_{m},s_{m}}}^{\dagger}.
    \label{eq:resol_identity_explicit}
\end{multline}

In the following subsection, we will prove the expressions~\eqref{eq:inverse_Shapovalov_form}--\eqref{eq:qrs_as_inverse_norm}, while we will discuss how to retrieve~\eqref{eq:single_vector_coefficients} from Zamolodchikov's formula~\eqref{eq:Zamolodchikov_norm} in Appendix~\ref{app:Zamolodchikov_proof}.

\subsection{Proving the Completeness Relation and the Inverse Shapovalov Form}

To begin with our proof of the expression for the inverse Shapovalov form, let us first recall two important results. The first one, proven by Brown in~\cite{BROWN2003160}, informs us of the pole structure of the inverse Shapovalov form.
\begin{lemma}
    \label{lemmaBrown}
    \textbf{\textup{(Brown)}} The inverse Shapovalov form $S_\ell^{-1}(h,c)$ has only simple poles in $h$ for generic values of the central charge.
\end{lemma}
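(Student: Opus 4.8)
The plan is to establish the pole structure of $S_\ell^{-1}(h,c)$ by combining the explicit Cramer-rule representation of the inverse with the factorization structure of the Kac determinant~\eqref{eq:Kacdeterminant}. Writing $\left[S_\ell^{-1}(h,c)\right]^{\mu\nu} = C_{\mu\nu}(h,c)/\det S_\ell(h,c)$, where $C_{\mu\nu}$ is the relevant cofactor (a polynomial in $h$), the poles in $h$ can only occur at the zeros of the Kac determinant, i.e. at the points $h = h_{\expval{r,s}}$ with $rs \le \ell$. The statement to prove is that for generic $c$ (equivalently $t \in \mathbb{C}\setminus\mathbb{Q}$) each such pole is simple, despite the Kac formula assigning multiplicity $p(\ell - rs)$ to the factor $\left[h - h_{\expval{r,s}}\right]$.

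The key steps, in order: first I would fix a zero $h_\star = h_{\expval{r,s}}$ of the Kac determinant and localize near it, noting that for generic $t$ the various $h_{\expval{r,s}}$ are pairwise distinct, so only one factor of $\det S_\ell$ vanishes at $h_\star$ and it does so to order exactly $p(\ell - rs)$. Second, I would invoke the Feigin--Fuchs/Jantzen-filtration picture: near $h_\star$, the Shapovalov form degenerates precisely because the degenerate module $V(h_\star,c)$ contains the singular vector $L_{\expval{r,s}}\ket{h_\star}$ and all its level-$\ell$ descendants $L_{-\mu} L_{\expval{r,s}}\ket{h_\star}$, a space of dimension $p(\ell - rs)$. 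Third — the technical heart — I would show that the Gram matrix of these descendant-of-singular vectors, regularized by dividing by $(h - h_\star)$, is itself invertible at $h_\star$ for generic $t$; equivalently the order of vanishing of the minor obtained by deleting the rows and columns spanning that null subspace is zero. This is exactly the content one needs: if the rank drop is $p(\ell-rs)$ and each of the $p(\ell-rs)$ "directions" carries a single power of $(h-h_\star)$, then $\det S_\ell$ has a zero of order $p(\ell-rs)$ while all cofactors $C_{\mu\nu}$ vanish to order at least $p(\ell-rs) - 1$, giving a simple pole in $S_\ell^{-1}$.

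I would make the third step precise by a change of basis: replace the standard monomial basis at level $\ell$ by one adapted to the filtration, whose last $p(\ell-rs)$ vectors are $L_{-\mu}L_{\expval{r,s}}\ket{h}$ for $\mu \in \mathbb{P}_{\ell-rs}$ (with $L_{\expval{r,s}}$ evaluated at generic $h$, not at $h_\star$). In this basis the Gram matrix is block-triangular up to $O(h-h_\star)$ corrections: the top-left block stays invertible at $h_\star$, and the bottom-right block equals $(h - h_\star)\,\tilde{S}_{\ell-rs}(h,c) + O((h-h_\star)^2)$ where $\tilde{S}_{\ell-rs}$ is essentially the level-$(\ell-rs)$ Shapovalov form of the submodule, rescaled by the singular-vector normalization $q_{\expval{r,s}}^{-1}$ of~\eqref{eq:qrs_as_inverse_norm}. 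Since $S_{\ell-rs}(h_\star + rs, c)$ is nondegenerate for generic $t$ (its determinant is a product of $\left[h_\star + rs - h_{\expval{r',s'}}\right]$ factors times nonzero constants, none of which vanishes generically), the bottom-right block is $(h - h_\star)$ times an invertible matrix, so $\det S_\ell \sim (h-h_\star)^{p(\ell-rs)}$ with the stated simple pole following from the adjugate formula.

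The main obstacle I anticipate is controlling the off-diagonal blocks and the $O((h-h_\star)^2)$ remainder rigorously enough to conclude that no accidental extra vanishing occurs — i.e. verifying that the minor obtained after stripping the null directions is genuinely nonvanishing at $h_\star$, rather than merely generically of the right order. This is where the genericity hypothesis $t\notin\mathbb{Q}$ does the real work: it guarantees both the distinctness of the $h_{\expval{r,s}}$ and the nondegeneracy of all the lower-level Shapovalov forms $S_{\ell-rs}(h_\star+rs,c)$ that appear in the residue. (I note this is precisely the result attributed to Brown in~\cite{BROWN2003160}, so for the purposes of this paper I will simply cite it; the sketch above indicates the route a self-contained argument would take.)
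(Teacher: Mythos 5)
Your proof, as actually stated, coincides with the paper's: the paper's entire proof of Lemma~\ref{lemmaBrown} is the citation to Corollary~6.2 of~\cite{BROWN2003160}, which is exactly what you give in your final sentence. The self-contained sketch you add goes beyond what the paper attempts, and it is sound in outline: for generic $t$ the zeros $h_{\expval{r,s}}$ with $rs\le\ell$ are pairwise distinct, the radical of $S_\ell(h_{\expval{r,s}},c)$ is spanned by the $p(\ell-rs)$ level-$\ell$ descendants of the singular vector, the off-diagonal blocks in your adapted basis are $O(h-h_{\expval{r,s}})$ by orthogonality of the submodule, and the Schur-complement computation with diagonal sub-block $(h-h_{\expval{r,s}})\,q_{\expval{r,s}}^{-1}S_{\ell-rs}(h_{\expval{r,s}}+rs,c)+O\bigl((h-h_{\expval{r,s}})^2\bigr)$ then gives a determinant vanishing to order exactly $p(\ell-rs)$ and an inverse with at most a simple pole. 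The one caveat worth flagging is that the two facts doing the real work --- that the corank at $h_{\expval{r,s}}$ is exactly $p(\ell-rs)$, and that the submodule generated by $L_{\expval{r,s}}\ket{h_{\expval{r,s}}}$ is itself a non-degenerate Verma module for $t\notin\mathbb{Q}$ (so that $S_{\ell-rs}(h_{\expval{r,s}}+rs,c)$ is invertible) --- are themselves the Feigin--Fuchs structure theorems for generic central charge; a truly self-contained argument would have to import or reprove those, so the citation remains the honest proof here, as the paper itself concludes.
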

\begin{proof}
    See Corollary 6.2 of~\cite{BROWN2003160}.
\end{proof}
Notice that the pole structure of \eqref{eq:inverse_Shapovalov_form} is in agreement with Lemma \ref{lemmaBrown}.  The second result concerns, instead, the way positive modes can annihilate descendants of singular vectors.
\begin{lemma}
    \label{lemma2}
    Given labels $\nu$ and $\rho$ with $\abs{\nu}>\abs{\rho}\ge 0$, we have
    \begin{multline}
        L_\nu L_{-\rho} L_{\expval{r,s}}\\[2pt]
        =b^\lambda_{\nu\rho}(h_{\expval{r,s}},c) L_{-\lambda}(L_0-h_{\expval{r,s}})^n +(\dots),
        \label{eq:Lemma2}
    \end{multline}
    where $b^\lambda_{\nu\rho}(h_{\expval{r,s}},c)$ are undetermined coefficients, the suppressed terms $(\dots)$ vanish when applied on any primary state, the repeated index satisfies $\abs{\lambda}=r s+\abs{\rho}-\abs{\nu}$, and the exponent $n$ is a positive integer.
\end{lemma}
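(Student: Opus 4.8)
The plan is to establish \eqref{eq:Lemma2} by induction on the level drop $\abs{\nu}-\abs{\rho}$, exploiting the fact that $L_{\expval{r,s}}$ behaves as a primary up to terms proportional to $L_0-h_{\expval{r,s}}$. The base case is $\abs{\nu}=\abs{\rho}+1$, i.e. when a single positive mode acts. Here one uses the commutation relations of $\mathit{Vir}$ to push $L_\nu$ through $L_{-\rho}$, reducing to terms of the form $L_{-\rho'} L_k L_{\expval{r,s}}$ with $k\ge 1$ and $\abs{\rho'}=\abs{\rho}-(\abs{\nu}-1-\abs{\rho'})$, plus terms where $L_\nu$ has been fully absorbed into $L_{-\rho}$ leaving $L_{-\rho''}L_{\expval{r,s}}$ at the wrong level (these are the $(\dots)$ if they multiply something vanishing, or else need care). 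The key input is that $L_k L_{\expval{r,s}}$ for $k\ge 1$, after reordering, is a sum of products of negative modes times a factor of the form $(L_0-h_{\expval{r,s}})$ — this is precisely the statement that $L_{\expval{r,s}}\ket{h_{\expval{r,s}}}$ is singular, read at the level of the enveloping algebra as in \eqref{eq:vanishing_of_L31}. Thus $L_k L_{\expval{r,s}} = \big(\sum_\sigma c^\sigma_k L_{-\sigma}\big)(L_0-h_{\expval{r,s}})$ with $\abs{\sigma}=rs-k$, modulo terms that annihilate primaries.

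For the inductive step, I would write $L_\nu = L_{n_p}\cdots L_{n_1}$ with $n_1\le\cdots\le n_p$ all positive, peel off $L_{n_1}$, and apply the induction hypothesis to $L_{n_p}\cdots L_{n_2} L_{-\rho'} L_{\expval{r,s}}$ after first commuting $L_{n_1}$ to the right past $L_{-\rho}$. Commuting $L_{n_1}$ past $L_{-\rho}$ produces a sum of terms $L_{-\rho^{(a)}} L_{\expval{r,s}}$ and $L_{-\rho^{(a)}} L_{k} L_{\expval{r,s}}$; on the latter we use the base-case identity to extract one factor of $(L_0-h_{\expval{r,s}})$ and reabsorb the $L_{-\sigma}$ into $L_{-\rho}$, then apply the induction hypothesis with a strictly smaller level drop. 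Each application contributes at least one more factor of $(L_0-h_{\expval{r,s}})$, so after processing all $p$ modes we accumulate a positive power $n$; the precise bookkeeping of $n$ is not needed since the statement only asks for $n\ge 1$, which already follows from the single extraction in the base case applied to the innermost reordering. The level constraint $\abs{\lambda}=rs+\abs{\rho}-\abs{\nu}$ is just conservation of grading under the Virasoro commutators (each $L_m$ shifts level by $-m$), which is automatic and can be checked by inspection.

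The main obstacle is handling the terms $(\dots)$ cleanly: when $L_{n_1}$ is absorbed entirely into $L_{-\rho}$ without hitting $L_{\expval{r,s}}$, one can be left with an expression like $L_{-\rho'}L_{\expval{r,s}}$ at a level $rs+\abs{\rho'}$ that does not visibly carry a factor of $(L_0-h_{\expval{r,s}})$, so it is not obvious it is either of the claimed form or vanishing on primaries. The resolution is that such terms have the \emph{wrong total level} to appear in the final reordered expression once \emph{all} of $L_\nu$ has been commuted through — they must eventually meet another positive mode or be cancelled — so the correct formulation is to first fully reorder $L_\nu L_{-\rho}$ into $\sum L_{-\rho'}L_{\mu'}$ with $\abs{\mu'}=\abs{\nu}-\abs{\rho}+\abs{\rho'}\ge \abs{\nu}-\abs{\rho}>0$ (using $\abs{\nu}>\abs{\rho}$), and only then act on $L_{\expval{r,s}}$. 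Since every surviving $L_{\mu'}$ then contains at least one positive mode, applying the singular-vector reordering once to the innermost such mode already produces the factor $(L_0-h_{\expval{r,s}})^n$ with $n\ge 1$, and everything that cannot be put in the form $L_{-\lambda}(L_0-h_{\expval{r,s}})^n$ is a leftover product still containing an unabsorbed positive mode, hence annihilates any primary state. I would make this precise by a secondary induction on $\abs{\mu'}$, which terminates because each singular-vector reordering step strictly decreases the number of positive modes in play.
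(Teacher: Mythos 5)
Your final formulation --- fully reorder $L_\nu L_{-\rho}$ into $\sum L_{-\rho'}L_{\mu'}$ with $\abs{\mu'}=\abs{\nu}-\abs{\rho}+\abs{\rho'}>0$, then use the fact that the strictly positive-level part must annihilate the singular vector to force the $(L_0-h_{\expval{r,s}})^n$ factor --- is exactly the paper's (very brief) proof, so the proposal is correct and takes essentially the same route; the preliminary induction on $\abs{\nu}-\abs{\rho}$ is an unnecessary detour that you rightly abandon. The only ingredient the paper adds is the remark that $n$ can later be pinned to $1$ via Lemma~\ref{lemmaBrown}, which is not needed for the statement as given.
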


\begin{proof}
The standard reordering of the product $L_{\nu}L_{-\rho}$ leads to a sum of terms $L_{-\alpha}L_\beta$ with $|\alpha|\geq0$ and $|\beta|=|\nu|-|\rho|+|\alpha|>0$, which has to vanish when acting on a singular vector $L_{\expval{r,s}}\ket{h_{\expval{r,s}}}$. This implies the presence of a $(L_0-h_{\expval{r,s}})^n$ factor with a positive power $n$, as happened for example in~\eqref{eq:vanishing_of_L31}. This power can eventually be fixed to one by the use of Lemma~\ref{lemmaBrown}, as seen below.
\end{proof}

To prove~\eqref{eq:inverse_Shapovalov_form}--\eqref{eq:qrs_as_inverse_norm} we will make extensive use of the identity action of $\mathbf{S}_{\ell}^{-1}(L_0,\hat{c})$ expressed in~\eqref{eq:inverseShap_identity_on_descendants}.
We will use strong induction in $\ell$ to prove that, when acting on any level-$\ell$ descendant, the RHS of~\eqref{eq:inverse_Shapovalov_form} acts as the identity.

The starting point of induction is simple. Ignoring $S_0(L_0,\hat{c})=1$ which trivially acts as the identity, we have at level $\ell=1$
\begin{multline}
    \mathbf{S}_{1}^{-1}(L_0,\hat{c})L_{-1}\!\ket{h}=L_{-1}\frac{1}{2 L_0}L_1 L_{-1}\!\ket{h}\\=L_{-1}\frac{2L_0}{2L_0}\!\ket{h}=L_{-1}\!\ket{h},
\end{multline}
which is true for any Verma module.

Let us now take as induction hypothesis that expression~\eqref{eq:inverse_Shapovalov_form} acts appropriately as the identity for any level $<\check{\ell}$. We will now show that this implies it must act as the identity also on any level-$\check\ell$ descendant.

When considering the action on any descendant $L_{-\rho}\!\ket{h}$ with $\abs{\rho}=\check\ell$ we can distinguish two cases, depending on whether the state we are acting on is a (descendant of a) singular vector or not.

\textbf{Case I.}\,\,\, If the state we act on is a singular vector $L_{\expval{r,s}}\ket{h_{\expval{r,s}}}$ of level $r s\le \check\ell$ or a descendant thereof, we can take advantage of Lemma~\ref{lemma2}. This instructs us that, for any surviving term, the reordering procedure has to produce a factor $(L_0-h_{\expval{r,s}})^n$, which on its own would be vanishing when evaluated on $\ket{h_{\expval{r,s}}}$. The only terms that can then survive, are those that contain a pole that precisely cancels this zero. By Lemma~\ref{lemmaBrown}, we know that there can only be simple poles in the inverse Shapovalov form, so the power $n$ has to be equal to one. We thus can focus only on terms with a simple pole $(L_0-h_{\expval{r,s}})^{-1}$. This divides further into two subcases.

\textbf{Subcase a.} \,\,\, If the state we are acting upon is itself a singular vector, \textit{i.e.}\ $rs=\check\ell$, there is only one term of level $\check{\ell}$ in~\eqref{eq:inverse_Shapovalov_form} with the appropriate pole. Using~\eqref{eq:singular_vector_decomposition}, its action on $L_{\expval{r,s}}\ket{h_{\expval{r,s}}}$ corresponds to
\begin{equation}
\begin{aligned}
    L_{\expval{r,s}}& \frac{q_{\expval{r,s}}}{L_0-h_{\expval{r,s}}}L_{\expval{r,s}}^\dagger L_{\expval{r,s}} \ket{h_{\expval{r,s}}}\\
    &=L_{\expval{r,s}}\!\frac{q_{\expval{r,s}}v_{\expval{r,s}}^\mu v_{\expval{r,s}}^\nu}{L_0-h_{\expval{r,s}}}\! \left[S_{r s}(L_0,\hat{c})\right]_{\mu\nu}\!\ket{h_{\expval{r,s}}}.
    \end{aligned}
    \label{eq:action_identity_one_singvec}
\end{equation}
It is easy to check that this action reduces to that of the identity once we replace $q_{\expval{r,s}}$ with its definition~\eqref{eq:qrs_as_inverse_norm}. While this proves sufficiency, one may wonder if the level-$\check{\ell}$ inverse Shapovalov form can contain additional terms with the same pole $\left(L_0-h_{\expval{r,s}}\right)^{-1}$. The answer is negative, as we will now argue. In fact, to be able to reconstruct the $L_{\expval{r,s}}\ket{h_{\expval{r,s}}}$ singular vector after deconstructing it with positive modes, we necessarily need to have $L_{\expval{r,s}}$ as the only creation operator. The Hermiticity of the inverse Shapovalov form then forces us to include only $L_{\expval{r,s}}^{\dagger}$ on its right, reproducing precisely the term we analyzed in~\eqref{eq:action_identity_one_singvec}.
We have therefore proven that for every singular vector of level $rs=\check{\ell}$ there must be exactly one term with the corresponding pole in the inverse Shapovalov form, and that said term corresponds precisely to what is described by~\eqref{eq:inverse_Shapovalov_form} and~\eqref{eq:qrs_as_inverse_norm}.

\textbf{Subcase b.} \,\,\,If we act on a descendant of a singular vector $L_{-\rho} L_{\expval{r,s}} \!\ket{h_{\expval{r,s}}}$ with $rs<\check\ell$ and $\abs{\rho}+r s=\check{\ell}$, the action of our candidate identity can be written as
\begin{equation}
\begin{aligned}
    \sum_{\tilde{\boldsymbol{r}}\cdot\tilde{\boldsymbol{s}}=\check{\ell}-r s}&L_{\expval{r_{m},s_{m}}}\dots L_{\expval{r_{2},s_{2}}}\\
    &\times \left(L_{\expval{r,s}}\frac{q_{\expval{r,s},\expval{r_2,s_2},\dots \expval{r_{m},s_{m}}}}{L_0-h_{\expval{r,s}}}L_{\expval{r,s}}^{\dagger}\right)\\
    &\times L_{\expval{r_{2},s_{2}}}^{\dagger}\dots L_{\expval{r_{m},s_{m}}}^{\dagger}L_{-\rho} L_{\expval{r,s}} \!\ket{h_{\expval{r,s}}},
    \end{aligned}
\end{equation}
where $\tilde{\boldsymbol{r}}=(r_2,\dots,r_m)$ and $\tilde{\boldsymbol{s}}=(s_2,\dots,s_m)$ parametrize~$\compls_{\check{\ell}-r s}$.
We can now observe that the product $L_{\expval{r_{2},s_{2}}}^{\dagger}\dots L_{\expval{r_{m},s_{m}}}^{\dagger}L_{-\rho}$, which is of total level zero and is acting on the singular vector $L_{\expval{r,s}} \!\ket{h_{\expval{r,s}}}$, must actually equate to a polynomial of $L_0$ and $\hat{c}$. This means then that it must commute with the term in parentheses, which is also of total level zero. Performing this commutation and multiplying and dividing by the $q_{\expval{r,s}}$ coefficient, we get
\begin{multline}
    \sum_{\tilde{\boldsymbol{r}}\cdot\tilde{\boldsymbol{s}}=\check{\ell}-r s}\frac{q_{\expval{r,s},\expval{r_2,s_2},\dots \expval{r_{m},s_{m}}}}{q_{\expval{r,s}}}\\
    \times L_{\expval{r_{m},s_{m}}}\dots L_{\expval{r_{2},s_{2}}}L_{\expval{r_{2},s_{2}}}^{\dagger}\dots L_{\expval{r_{m},s_{m}}}^{\dagger}L_{-\rho}\\
    \times \!\left(\!L_{\expval{r,s}}\frac{q_{\expval{r,s}}}{L_0-h_{\expval{r,s}}}L_{\expval{r,s}}^{\dagger}\!\right)\!L_{\expval{r,s}} \!\ket{h_{\expval{r,s}}\!}.
\end{multline}
It is now easy to identify the third line with the action of the identity we just analyzed in~\eqref{eq:action_identity_one_singvec}. By inductive assumption, the term in parentheses acts then trivially and can thus be erased, leaving us with
\begin{multline}
    \sum_{\tilde{\boldsymbol{r}}\cdot\tilde{\boldsymbol{s}}=\check{\ell}-r s}\frac{q_{\expval{r,s},\expval{r_2,s_2},\dots \expval{r_{m},s_{m}}}}{q_{\expval{r,s}}} L_{\expval{r_{m},s_{m}}}\dots L_{\expval{r_{2},s_{2}}}\\
    \times L_{\expval{r_{2},s_{2}}}^{\dagger}\dots L_{\expval{r_{m},s_{m}}}^{\dagger}L_{-\rho}L_{\expval{r,s}} \!\ket{h_{\expval{r,s}}}.
\end{multline}
Using~\eqref{eq:recurrence_q_coefficients} we identify the ratio of $q$-coefficients with
\begin{equation}
    \frac{q_{\expval{r,s},\expval{r_2,s_2},\dots \expval{r_{m},s_{m}}}}{q_{\expval{r,s}}}=\frac{q_{\expval{r_2,s_2},\dots \expval{r_{m},s_{m}}}}{h_{\expval{r,s}}+r s-h_{\expval{r_2,s_2}}}\,,
\end{equation}
and we furthermore have the identity
\begin{multline}
    \frac{1}{h_{\expval{r,s}}+r s-h_{\expval{r_2,s_2}}} L_{\expval{r,s}} \!\ket{h_{\expval{r,s}}}\\
    =\frac{1}{L_0-h_{\expval{r_2,s_2}}}L_{\expval{r,s}} \!\ket{h_{\expval{r,s}}}.
\end{multline}
Moving the latter fraction through the product $L_{\expval{r_{2},s_{2}}}^{\dagger}\dots L_{\expval{r_{m},s_{m}}}^{\dagger}L_{-\rho}$ (again, of total level zero), we reconstruct the inverse Shapovalov at level $\abs{\rho}<\check\ell$
\begin{equation}
\begin{split}
    \begin{aligned}
    \Biggl(\,\sum_{\tilde{\boldsymbol{r}}\cdot\tilde{\boldsymbol{s}}=\check{\ell}-r s}& L_{\expval{r_{m},s_{m}}}\dots L_{\expval{r_{2},s_{2}}} \frac{q_{\expval{r_2,s_2},\dots \expval{r_{m},s_{m}}}}{L_0-h_{\expval{r_2,s_2}}}\\
    &\times L_{\expval{r_{2},s_{2}}}^{\dagger}\dots L_{\expval{r_{m},s_{m}}}^{\dagger}\Biggr)L_{-\rho} L_{\expval{r,s}} \!\ket{h_{\expval{r,s}}}\end{aligned}\\
    = \mathbf{S}^{-1}_{\abs{\rho}}(L_0,\hat{c}) L_{-\rho} L_{\expval{r,s}} \!\ket{h_{\expval{r,s}}}\\
    =L_{-\rho} L_{\expval{r,s}} \!\ket{h_{\expval{r,s}}},
\end{split}
\end{equation}
where we used~\eqref{eq:inverse_Shapovalov_form} in the first equality and~\eqref{eq:inverseShap_identity_on_descendants} with $\ket{\psi}\equiv L_{\expval{r,s}}\!\ket{h_{\expval{r,s}}}$ in the second. We have thus shown that also in this case the RHS of~\eqref{eq:inverse_Shapovalov_form} acts as the identity and no other terms with the same pole can be present. Since the set of poles of the inverse Shapovalov form should manifest in the set of zeros of the Kac determinant~\eqref{eq:Kacdeterminant}, and since Lemma~\ref{lemmaBrown} guarantees us that there can only be simple poles, what we have shown so far also implies that no other singularities can be present in~\eqref{eq:inverse_Shapovalov_form}.

\textbf{Case II.} \,\,\, If the state we are acting on is neither singular nor a descendant of a singular vector, no particular set of terms gets isolated, and all the terms of~\eqref{eq:inverse_Shapovalov_form} must contribute.

As we argued above, there cannot be additional singular terms, so the only possible way in which~\eqref{eq:inverse_Shapovalov_form} could have corrections is if there are contributions from some regular terms. Regular terms in~\eqref{eq:inverse_Shapovalov_form} would acquire the form
\begin{equation}
    L_{-\mu}[f_\ell(L_0,\hat{c})]^{\mu\nu}L_\nu,
\end{equation}
with $|\mu|=|\nu|=\ell$ and $[f_\ell(L_0,\hat{c})]^{\mu\nu}$ regular functions of $L_0$.  Acting on the descendant $L_{-\ell}\!\ket{h}$ for a generic $h$ leads to
\begin{equation}
\begin{aligned}
&L_{-\mu}[f_\ell(L_0,\hat{c})]^{\mu\nu}L_\nu L_{-\ell}\!\ket{h}\\
&=L_{-\mu}\ket{h}[f_{\ell}(h,c)]^{\mu\nu}[S_{\ell}(h,c)]_{(\ell)\nu}\\
&=L_{-\mu}\ket{h}[f_{\ell}(h,c)]^{\mu\nu}\left(a_{(\ell)\nu} h+ \frac{c}{12}b_{(\ell)\nu}\right),
\end{aligned}
\end{equation}
where the $a_{(\ell)\nu}\ne 0$ and $b_{(\ell)\nu}$ are just integers.  Clearly, the terms above diverge at least linearly for $h\to\infty$. This divergence cannot be canceled by the singular terms we have already studied since, when acting on $L_{-\ell}\ket{h}$, these can only produce terms that scale at most as~$h^0$. This means that all regular terms must vanish.

Having considered all possible terms with poles and having excluded the presence of regular terms, we can conclude that the expression~\eqref{eq:inverse_Shapovalov_form} for the index-free inverse Shapovalov form is correct and complete.

\section{Concluding Remarks}

In this paper, we introduced and proved a new expression for the index-free inverse Shapovalov form of the Virasoro algebra for generic central charge, as displayed in~\eqref{eq:inverse_Shapovalov_form}. Although it may seem unconventional at first, this index-free version of the inverse Shapovalov form is what appears explicitly in the Virasoro completeness relation, \textit{cf.}~\eqref{eq:resol_identity}, and has various desirable properties.
To begin with, the pole structure of the inverse Shapovalov is manifest, with only presence of simple poles in the locations dictated by the Kac determinant. The residues at those poles are all made of (products of) singular vectors and their conjugates, which are the only way creation and annihilation modes are packaged in the inverse Shapovalov form.

Furthermore, the form of expression~\eqref{eq:inverse_Shapovalov_form} is particularly interesting when viewed through the lens of the sewing procedure~\cite{Sonoda:1988mf}. Provided knowledge of singular vectors, our new formula reduces the level-by-level computation of generic conformal blocks to the action of the differential operators associated with the singular vectors $L_{\expval{r,s}}$ on three-point correlators.  As an example, for four-point conformal blocks this procedure yields
\begin{equation*}
\begin{aligned}
    F_h(\eta)&=\eta^h\sum_{\ell\geq0}\sum_{\boldsymbol{r}\cdot\boldsymbol{s}=\ell}\frac{q_{\expval{r_1,s_1},\dots, \expval{r_{m},s_{m}}}}{h-h_{\expval{r_1,s_1}}} \eta^{\ell}\\
    &\times\prod_{j=1}^m\left[h-h_1+\sum_{k=1}^{j-1}r_ks_k,\,h_2\right]_{\expval{r_j,s_j}}\\
    &\times\prod_{j=1}^m\left[h-h_4+\sum_{k=1}^{j-1}r_ks_k,\,h_3\right]_{\expval{r_j,s_j}}
\end{aligned},
\end{equation*}
in the notation of~\cite{Osborn:2012vt}, and where
\begin{equation*}
    [a,b]_\mu=\prod_{j=1}^n\left[a+\sum_{k=1}^{j-1}\mu_k+b\mu_j\right]
\end{equation*}
is a generalization of the usual Pochhammer symbols for two numbers $a$, $b$ as well as a partition $\mu=(\mu_1,\dots,\mu_n)$ and $[a,b]_{\expval{r,s}}=v_{\expval{r,s}}^\mu[a,b]_\mu$.  More details will be provided in an upcoming work~\cite{In_preparation}.

Being structured as a sum of diagonal terms, expression~\eqref{eq:inverse_Shapovalov_form} also reorganizes the sewing procedure in a way compatible with the operator product expansion (OPE). Applying in fact~\eqref{eq:correlator_decomp_shapovalov} recursively to decompose a correlator into sums of products of three-point correlators, one obtains a form reminiscent of the OPE decomposition of correlators modulo the OPE coefficients. This suggests that the coefficients that appear in~\eqref{eq:inverse_Shapovalov_form} could be related to the kinematic coefficients that dictate the form of the differential operator in the OPE.

Despite the fact that the index-free inverse  Shapovalov~\eqref{eq:inverse_Shapovalov_form} can be used directly in the discussed cases, one may be interested in its matrix form. In this case, the expression we introduced can still be of great use, as it allows to extract the matrix elements by a simple reordering of the products of singular vectors, rather than computing the inverse of large matrices.

To conclude with, it is clear that our expression can lead to new interesting insights on some aspects of the Virasoro algebra. As another example, our result can also be used to determine the form of a generalized Casimir operator for the Virasoro algebra, as we will address in an upcoming publication~\cite{Fortin:2024wcs}.

\begin{acknowledgments}
JFF would like to thank Pierre Mathieu for enlightening discussions and CERN for its hospitality.  This work was supported by NSERC (JFF and LQ) and the US Department of Energy under grant DE-SC00-17660 (WS).
\end{acknowledgments}

\appendix

\section{The Regularized Norm of Singular Vectors and Zamolodchikov's Expression}
\label{app:Zamolodchikov_proof}

In~\cite{Zamolodchikov:2003yb}, Zamolodchikov proposed a formula for the regularized norm of singular vectors, whose inverse appears in the RHS of~\eqref{eq:qrs_as_inverse_norm}. This formula, also proven algebraically in~\cite{Yanagida:2010qm}, reads as follows in our conventions:
\begin{equation}
    q_{\expval{r,s}}^{-1}=2 \prod_{\substack{(j,k)\in \mathbbb{Z}^2\\1-r\le j\le r\\
    1-s\le k \le s\\
    (j,k)\ne (0,0),(r,s)}}\left(j t^{\frac12}+k t^{-\frac12}\right).
    \label{eq:Zamolodchikov_norm}
\end{equation}
We will now show that this is equivalent to the inverse of~\eqref{eq:single_vector_coefficients}. Naming $a_{j,k}$ the expression in brackets above, we can decompose the product range as 
\begin{equation}
    q_{\expval{r,s}}^{-1}=2\hspace*{-10pt}\prod_{\substack{1 \le j \le r\\
    1\le k\le s\\
    (j,k)\ne (r,s)}}\hspace*{-6pt}a_{j,k} \prod_{\substack{1-r \le j \le 0\\
    1\le k\le s}}\hspace*{-4pt}a_{j,k} \prod_{\substack{1 \le j \le r\\
    1-s\le k\le 0}}\hspace*{-4pt}a_{j,k} \prod_{\substack{1-r \le j \le 0\\
    1-s\le k\le 0\\
    (j,k)\ne (0,0)}}\hspace*{-6pt}a_{j,k}\,.
\end{equation}
Naming the four products above respectively $A_1$ to $A_4$, the following holds:
\begin{align}
    &A_1=\frac{t^{\frac{r s+1}{2}}(1+rt)_{s-1}}{s!}\prod_{k=1}^s\left(k/t\right)_r\,,\\
    &A_2=(-1)^{rs} t^{\frac{r s}{2}}\prod_{k=1}^s\left(-k/t\right)_r\,,
    \end{align}
    \begin{align}
    &A_3=\left[(-1)^{rs} t^{\frac{r s}{2}}\right]^{-1}\prod_{j=1}^r\left(-j t\right)_s\,,\\
    &A_4=\left[\frac{t^{\frac{r s+1}{2}}(1+rt)_{s-1}}{s!}\right]^{-1}\frac{(-1)^{rs+1}}{rs}\prod_{j=1}^r\left(j t\right)_s\,.
\end{align}
At this point, simply taking the product of these four objects times two reproduces the inverse of~\eqref{eq:single_vector_coefficients}.


\end{document}